\newcommand{\Zp}{\ensuremath{\mathsf{Z}}}
\renewcommand{\sec}[1]{\hyperref[sec:#1]{Section~\ref*{sec:#1}}}
\newcommand{\im}{\mathrm{i}}
\newcommand{\N}{\ensuremath{\mathbb{N}}}
\newcommand{\R}{\ensuremath{\mathbb{R}}}
\def\lrb#1{\left( {#1} \right)}
\newtheorem{theorem}{Theorem}
\newtheorem{corollary}{Corollary}
\newtheorem{proposition}{Proposition}
\crefname{figure}{Fig.}{Figs.}
\Crefname{figure}{Fig.}{Figs.}
\crefname{theorem}{Theorem}{Theorems}
\Crefname{theorem}{Theorem}{Theorems}
\crefname{proposition}{Proposition}{Propositions}
\Crefname{proposition}{Proposition}{Propositions}
\crefname{corollary}{Corollary}{Corollaries}
\Crefname{corollary}{Corollary}{Corollaries}
\DeclarePairedDelimiter\bra{\langle}{\rvert}
\DeclarePairedDelimiter\ket{\lvert}{\rangle}
\DeclarePairedDelimiterX\braket[2]{\langle}{\rangle}{#1 \delimsize\vert #2}
\DeclarePairedDelimiterX\ketbra[2]{| }{|}{#1 \delimsize\rangle\!\delimsize\langle #2}
\DeclarePairedDelimiterX\dotp[2]{\langle}{\rangle}{#1, #2}
\title{A simple lower bound for the complexity of estimating partition functions on a quantum computer}
\author{
    Zherui Chen\thanks{Xingjian College, Tsinghua University. Email: \href{mailto:zr-chen20@mails.tsinghua.edu.cn}{zr-chen20@mails.tsinghua.edu.cn}}
    \and
    Giacomo Nannicini\thanks{Department of Industrial and Systems Engineering, University of Southern California. Email: \href{mailto:g.nannicini@usc.edu}{g.nannicini@usc.edu}}
}
\date{\today} 
\begin{document}
\maketitle
\begin{abstract}

    We study the complexity of estimating the partition function $\Zp(\beta)=\sum_{x\in\chi} e^{-\beta H(x)}$ for a Gibbs distribution characterized by the Hamiltonian $H(x)$. We provide a simple and natural lower bound for quantum algorithms that solve this task by relying on reflections through the coherent encoding of Gibbs states.
    Our primary contribution is a $\varOmega(1/\epsilon)$ lower bound for the number of reflections needed to estimate the partition function with a quantum algorithm. The proof is based on a reduction from the problem of estimating the Hamming weight of an unknown binary string.
    
\end{abstract}

\section{Introduction}\label{sec:introduction}

In the context of statistical physics, a Gibbs distribution is a probability distribution that is used to calculate the statistical properties of a system in thermodynamic equilibrium. The Gibbs distribution is characterized by a Hamiltonian $H(x)$, which is the energy of the system's configuration $x$ and describes the likelihood that the system occupies the configuration. Usually, a temperature parameter is also given, and it affects the probabilities. The partition function associated with the Gibbs distribution is the normalization factor ensuring that the probabilities add up to one. (Formal definitions are given later given in the paper.) The partition function can also encode important physical information of the system, such as the free energy, the average energy and the entropy \cite{reichl2016modern,chowdhury2021computing}.

Besides the applications in physics or statistical mechanics \cite{gardiner2004quantum,trushechkin2022open,albeverio2002euclidean}, Gibbs distributions and the partition functions are also used in mathematics and computer science. Examples of applications of partition functions include the convergence analysis of stochastic gradient descent \cite{shi2020learning,shi2021hyperparameters}, counting proper $k$-colorings of a graph \cite{jerrum1995very, vigoda1999improved,bhandari2020improved}, the ferromagnetic Ising model \cite{mossel2013exact,weinberg2020scaling}, as well as counting matchings and independent sets \cite{harrow2020adaptive,arunachalam2022simpler,cornelissen2023sublinear}. Some of these applications depend on efficiently estimating the value of the partition function at different temperatures. While it is relatively easy to estimate at high temperatures, estimation becomes more challenging at low temperatures, although the low-temperature regime is where the aforementioned applications are found. For instance, the number of proper $k$-colorings of a graph \cite{jerrum1995very,vstefankovivc2009adaptive} and the volume of a convex body \cite{dyer1991computing,chakrabarti2023quantum} are obtained from the partition function at very low temperatures.

The literature contains many algorithms, both classical and quantum, to estimate partition functions. The majority of these approaches employ a strategy that begins in a high-temperature region and progressively lowers the temperature, using a telescoping product of partition functions at neighboring temperatures. Initial studies primarily employed a non-adaptive cooling schedule, implying that the gaps between neighboring temperatures are constant \cite{dyer1991computing,bezakova2008accelerating}; subsequently, adaptive schedules were proposed \cite{vstefankovivc2009adaptive,huber2015approximation,montanaro2015quantum,kolmogorov2018faster,harris2020parameter,harrow2020adaptive,arunachalam2022simpler,cornelissen2023sublinear}.

Most studies focus on improving the upper bound of partition function estimation. The study of lower bounds on the complexity of estimating partition function has not received as much attention. Exceptions are \cite{vstefankovivc2009adaptive,kolmogorov2018faster,childs2022quantum,harris2020parameter}, which, however, do not directly address the question of providing a quantum lower bound for the problem in a general fashion: we discuss the lower bounds of these papers in a subsequent section, after introducing the necessary notation. The  derivation of a general lower bound was also stated as an open question in \cite{arunachalam2022simpler}. Although a quantum lower bound of $\varOmega(1/\epsilon)$ seems natural and almost unavoidable in this setting, to the best of our knowledge a formal proof is not discussed in the literature. It is possible that such a result is part of the folklore: in that case, we welcome readers who are aware of existing proofs to contact us so that we can give proper attribution.

In this note, using a simple argument, we establish a $\varOmega(1/\epsilon)$ lower bound on the number of reflections needed by a quantum algorithm for estimating partition functions at low temperature. Existing quantum algorithms \cite{montanaro2015quantum,harrow2020adaptive,arunachalam2022simpler,cornelissen2023sublinear} match this scaling in $\epsilon$. With a similar argument, we also establish a $\varOmega(1/\epsilon^2)$ query lower bound for classical algorithms for estimating partition functions; this is weaker than the lower bound of $\varOmega(\log |\chi|/\epsilon^2)$ samples from the Hamiltonian recently given in \cite{harris2020parameter}, but we mention it because the proof is considerably simpler. The fastest classical algorithms \cite{kolmogorov2018faster,harris2020parameter} match this scaling in $\epsilon$.

The rest of this paper is organized as follows. In the remainder of this section we introduce our notation, discuss the existing lower bounds, and summarize our results (\cref{thm:main} and \cref{thm:main_classical}). In Section \ref{sec:quantum_lower_bound}, we prove the quantum lower bound (\cref{thm:main}). In Section \ref{sec:classical_lower_bound}, we prove the classical lower bound (\cref{thm:main_classical}). In Section \ref{sec:conclusion}, we conclude our work and discuss some open problems.

\subsection{Notation} Throughout this paper, we use $\cal O$, $o$, $\varOmega$ and $\varTheta$ to denote the traditional computer science notation for the asymptotic behavior of functions. Given an $n$-digit binary string $b \in \{0,1\}^n$, $\|b\|_1=\sum_{i=1}^n b_i$ denotes its Hamming weight. For $n \in \N$, $[n] = \{1,\dots,n\}$.

\subsection{Basic definitions} \label{sec:basic_definitions}

Let $\chi$ be a finite set  and $H:\chi\to\{0,1,\cdots,n\}$ a function called the \emph{Hamiltonian}. (The ground set for the Hamiltonian is often called $\varOmega$ in the literature, but we use $\chi$ because we reserve $\varOmega$ for the asymptotic notation for lower bounds.)  The \emph{Gibbs distribution} is defined as
\begin{align}
    \mu_\beta(x)=\frac{1}{\Zp(\beta)}e^{-\beta H(x)} \qquad \forall x\in\chi,
    \label{eq:gibbs}
\end{align}
where $\beta >0$ is called \emph{inverse temperature} and the normalization factor $\Zp(\beta)$ is the \emph{partition function}, explicitly:
\begin{align*}
    \Zp(\beta)=\sum_{x\in\chi} e^{-\beta H(x)}.
\end{align*}
The goal of \emph{partition function estimation} is to determine a value $\hat{Z}$ that estimates $\Zp(\beta)$ with a small relative error $\epsilon > 0$ at some given $\beta$, i.e., $(1-\epsilon)\Zp(\beta)\leq\hat{Z}\leq (1+\epsilon)\Zp(\beta)$.

The coherent quantum encoding of a Gibbs distribution is given by the following quantum state, also known as \emph{qsample} or \emph{Gibbs state}:
\begin{align}
    \ket{\mu_\beta}=\sum_{x\in \chi}\sqrt{\mu_\beta (x)}\ket{x}.
    \label{eq:mu_beta_state}
\end{align}
A measurement in the computational basis from \eqref{eq:mu_beta_state} yields a sample from the Gibbs distribution \eqref{eq:gibbs}. These qsamples are used in existing quantum algorithm to estimate $\Zp(\beta)$, see, e.g., \cite{montanaro2015quantum,harrow2020adaptive,arunachalam2022simpler,cornelissen2023sublinear}.

The computation of $\Zp(\beta)$ is relatively straightforward in the high-temperature region, where $\beta$ is small. For $\beta = 0$, the calculation is usually trivial. Conversely, in the low-temperature region, where $\beta$ is large, the calculation is difficult in general. We focus on the low-temperature region, and discuss the complexity of computing $\Zp(+\infty)$.

\subsection{Existing lower bounds}
For classical algorithms, the first lower bound on partition function
estimation is due to
\cite{vstefankovivc2009adaptive}. \cite{vstefankovivc2009adaptive}
gives a lower bound on the number of temperatures of a non-adaptive
cooling schedule, and therefore this lower bound only applies to the
class of simulated annealing algorithms with non-adaptive cooling
schedules. \cite{kolmogorov2018faster} establishes a lower bound of
$\varOmega\left(\frac{\log{|\chi|}}{\epsilon^2}\right)$ for the
estimation of partition function, in a computational model where we
receive samples from $H(x)$ but we are now allowed to know the
corresponding $x$, or query $H(x)$ for specific $x$. \cite{harris2020parameter} extends this to a lower bound in the black-box sampling model, showing optimality of existing classical algorithms. (Our classical lower bound
is not as strong, but it has a very simple proof.)

Regarding quantum algorithms, the only lower bound that we are aware
of is \cite{childs2022quantum}, which gives a lower bound of
$\varOmega(\epsilon^{-\frac{1}{1+4/k}})$ for the quantum query
complexity of estimating $\int_{\R^k} e^{-f(x)} \textrm{d}x$, under
the assumption that $f$ is $1.5$-smooth and $0.5$-strongly
convex. This lower bound is different from ours in that we consider a
general Hamiltonian defined over a discrete set, whereas
\cite{childs2022quantum} considers $f$ defined over real space and
strongly convex. Note that a classical lower bound of
$\varOmega(k/\epsilon^2)$ for the same setting as \cite{childs2022quantum} is proven in \cite{ge2020estimating}.

\subsection{Contributions}

Our main contribution (\cref{thm:main}) is a quantum query complexity lower bound for estimating partition functions with a simple proof. To state the lower bound we need to identify a meaningful ``unit of work'' for quantum algorithms to estimate $\Zp(+\infty)$. The basic steps that \cite{harrow2020adaptive,arunachalam2022simpler,cornelissen2023sublinear} have in common is the reflection through the Gibbs states \eqref{eq:mu_beta_state} at different inverse temperatures $\beta$, therefore our lower bound is stated in terms of the number of reflection through these states. Note that the quantum algorithms described in the above papers access the Hamiltonian only through these reflections.
\begin{theorem}[Informal]\label{thm:main}
    Any quantum algorithm that computes $\hat{Z}$ satisfying $|\hat{Z}-\Zp(+\infty)|\leq \epsilon \cdot\Zp(+\infty)$ for a given Hamiltonian $H$, where $H$ is accessed via the reflection through Gibbs states \eqref{eq:mu_beta_state} at different temperatures, takes $\varOmega({1/\epsilon})$ such reflections.
\end{theorem}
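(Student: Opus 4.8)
The plan is to reduce the problem of estimating the Hamming weight of an unknown binary string $b \in \{0,1\}^n$ to that of estimating a partition function at $\beta = +\infty$. Given $b$, define the ground set $\chi = \{0,1,\dots,n\}$ and a Hamiltonian $H_b$ that is $0$ on exactly $\|b\|_1 + 1$ of these configurations and $1$ elsewhere; then $\Zp(+\infty) = \sum_{x} \mathbf{1}[H_b(x)=0] = \|b\|_1 + 1$. A multiplicative $\epsilon$-approximation of $\Zp(+\infty)$ with $\epsilon = \varTheta(1/n)$ then pins down $\|b\|_1$ exactly (after rounding), so any algorithm solving partition-function estimation to relative error $\epsilon$ also solves Hamming-weight estimation to additive error below $1/2$. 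The crux is to argue that one reflection through the Gibbs state $\ket{\mu_\beta}$ can be implemented using only $\mathcal{O}(1)$ queries to the bits of $b$ (in the standard phase/bit oracle model), uniformly in $\beta$; combined with the known $\varOmega(n)$ quantum query lower bound for exactly computing $\|b\|_1$, this yields $\varOmega(n) = \varOmega(1/\epsilon)$ reflections.

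The key steps, in order, are: (i) fix the encoding $\chi, H_b$ above and verify $\Zp(+\infty) = \|b\|_1 + 1$, and check that relative error $\epsilon < \tfrac{1}{2(n+1)}$ determines $\|b\|_1$; (ii) recall (or invoke) that computing the exact Hamming weight of $b \in \{0,1\}^n$ requires $\varOmega(n)$ quantum queries to $b$ — this is a standard consequence of the polynomial method or of the $\varOmega(n)$ bound for \textsc{Parity}, since exact Hamming weight determines parity; (iii) show that the reflection operator $2\ket{\mu_\beta}\!\bra{\mu_\beta} - I$ can be synthesized from $\mathcal{O}(1)$ queries to the oracle for $b$. For step (iii) the idea is that $\ket{\mu_\beta}$ is supported on the (at most two) level sets of $H_b$, and preparing it amounts to loading the counts $|H_b^{-1}(0)|$ and $|H_b^{-1}(1)|$ with the correct Boltzmann weights; since these counts are simple functions of $\|b\|_1$ and $n$, and $H_b(x)$ for a single $x$ is computable from $\mathcal{O}(1)$ bit queries, a constant number of oracle calls suffices to implement the reflection (or, more carefully, to implement it up to negligible error, which is enough). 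Finally, (iv) chain the reductions: $R$ reflections $\Rightarrow$ $\mathcal{O}(R)$ queries to $b$ $\Rightarrow$ $R = \varOmega(n) = \varOmega(1/\epsilon)$.

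I expect step (iii) to be the main obstacle, because it requires a careful definition of how the algorithm "accesses $H$ via reflections" that is simultaneously (a) faithful to how the algorithms of \cite{harrow2020adaptive,arunachalam2022simpler,cornelissen2023sublinear} actually use the Gibbs state, and (b) reducible to a bounded number of elementary queries to $b$. One clean way around subtle issues is to make the construction so that the relevant level sets have a fixed, $b$-independent structure on the computational basis (e.g. $H_b(x)=0$ iff $x \le \|b\|_1$), so that the Gibbs state is a fixed superposition over an initial segment whose length is the only $b$-dependent parameter; then reflecting through it is essentially reflecting through a state determined by a single integer that itself costs $\varOmega(n)$ queries to learn. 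I would also need to handle the mild technical point that $\beta = +\infty$ is a limit: I would either work at $\beta$ finite but large enough that $\ket{\mu_\beta}$ is within $\epsilon/\text{poly}(n)$ of $\ket{\mu_\infty}$, or simply define the reflection at $+\infty$ directly as the reflection through the uniform superposition over the ground-energy configurations, which is what the limit gives.
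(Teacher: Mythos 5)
Your overall architecture matches the paper's: encode the Hamming weight of $b$ into $\Zp(+\infty)$ via a Hamiltonian determined by single bits of $b$, invoke a quantum query lower bound for Hamming weight, and convert queries into reflections by bounding the query cost of implementing one reflection. (The paper uses the promise version --- distinguishing $\|b\|_1 = N(\tfrac12+\delta)$ from $N(\tfrac12-\delta)$ via the Nayak--Wu bound $\varTheta(1/\delta)$ --- rather than exact Hamming weight via parity with $\epsilon = \varTheta(1/n)$; that difference is minor.) The genuine gap is your step (iii), which you correctly flag as the main obstacle but then resolve in a way that breaks the reduction. The reduction needs each reflection to be simulable with $O(1)$ queries to $b$, so that $R$ reflections yield an $O(R)$-query algorithm and hence $R=\varOmega(n)$. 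Your ``clean way around'' --- taking $H_b(x)=0$ iff $x\le\|b\|_1$ --- makes evaluating $H_b$ at a single point, and a fortiori reflecting through $\ket{\mu_\beta}$, as expensive as learning $\|b\|_1$ itself, i.e.\ $\varTheta(n)$ queries; plugging that into the reduction gives only $R=\varOmega(n)/\varTheta(n)=\varOmega(1)$, which is vacuous. The observation that the defining integer ``costs $\varOmega(n)$ queries to learn'' works against you here, not for you: a hard-to-implement reflection weakens, rather than strengthens, a lower bound on the number of reflections.

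Staying with the local construction ($H_b(x)=1-b_x$, the paper's choice), the claim that $O(1)$ queries per reflection suffice still does not follow merely from $H_b$ being computable from one bit query; it is the substantive content of the theorem. The paper implements the reflection via fixed-point amplitude amplification at a cost of $\varTheta\bigl(\sqrt{N/\|b\|_1}\bigr)$ queries, which is $O(1)$ only because the hard instances are arranged to have $\|b\|_1=\varTheta(N)$ --- whereas the hard instances for exact Hamming weight/parity include strings of weight $o(N)$. Moreover, the amplitudes of $\ket{\mu_\beta}$ at finite $\beta$ depend on the unknown $\|b\|_1$, so the reflection can only be implemented approximately; the paper bounds the per-reflection overlap defect by $O(\delta^2)$ and controls the accumulated error using the technical assumption that at most $O(\log^2|\chi|/\epsilon)$ reflections are performed. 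None of this is fatal to your plan --- restricting to weights near $n/2$ and using the $\varTheta(n)$ bound for distinguishing adjacent weights there would essentially recover the paper's argument --- but as written the proposal omits the part of the proof that carries the mathematical weight, and its one concrete suggestion for that part would invalidate the reduction.
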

We also give a query lower bound of estimating partition function using classical algorithms. 
\begin{theorem}[Informal]\label{thm:main_classical}
    Any classical algorithm that computes $\hat{Z}$ satisfying $|\hat{Z}-\Zp(+\infty)|\leq \epsilon \cdot\Zp(+\infty)$ for a given Hamiltonian $H$, takes $\varOmega({1/\epsilon^2})$ queries to $H$.
\end{theorem}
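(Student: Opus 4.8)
The plan is to prove \cref{thm:main_classical} by a reduction from the problem of estimating the Hamming weight of an unknown binary string, exactly as announced in the abstract for the quantum case; the classical version is in fact the simpler of the two and should come essentially ``for free'' from the same construction. First I would fix $n \in \N$ and, given an unknown string $b \in \{0,1\}^n$, build a Hamiltonian instance on a ground set $\chi$ of size roughly $2n$ (say, two elements per coordinate) such that $\Zp(+\infty) = |\{x \in \chi : H(x) = 0\}|$ is an affine function of $\|b\|_1$ — concretely, arrange that each index $i$ contributes $1$ to $\Zp(+\infty)$ if $b_i = 0$ and $2$ (or $0$) if $b_i = 1$, so that $\Zp(+\infty) = n + \|b\|_1$, a quantity lying in $[n, 2n]$. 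A single classical query to $H$ at a point encoding index $i$ reveals $b_i$, and conversely a query to the $i$-th bit of $b$ can be answered by one query to $H$; hence the two query models are equivalent up to constant factors.

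Next I would invoke the standard information-theoretic lower bound for Hamming-weight estimation: distinguishing whether $\|b\|_1$ is around $n/2 - t$ versus $n/2 + t$ requires $\varOmega(n^2/t^2)$ classical queries (this is the classical counterpart of the well-known $\varTheta(n)$ quantum bound, and follows from a Chernoff/Chebyshev argument on the number of revealed bits, or from Yao's principle against a random hard instance). If we want a multiplicative $\epsilon$-approximation of $\Zp(+\infty) = n + \|b\|_1 = \varTheta(n)$, we must resolve $\|b\|_1$ to additive precision $O(\epsilon n)$; setting $t = \varTheta(\epsilon n)$ in the Hamming-weight bound gives $\varOmega(n^2/(\epsilon n)^2) = \varOmega(1/\epsilon^2)$ queries. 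This is the desired bound, and it holds uniformly in $n$ once $n \gtrsim 1/\epsilon$, so there is genuinely a family of hard instances.

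The key steps, in order, are: (i) specify the reduction gadget $(\chi, H)$ from $b$ to a Hamiltonian with $\Zp(+\infty)$ affine in $\|b\|_1$, and verify the query-model equivalence; (ii) state and prove (or cite) the $\varOmega(1/\delta^2)$-query lower bound for estimating the bias of an $n$-bit string to additive accuracy $\delta n$ — I would do this via Le Cam's two-point method, choosing $b$ uniformly from strings of weight $n/2 \pm \epsilon n$ and bounding the total variation distance between the two induced query-answer distributions after $q$ queries by a function of $q \epsilon^2$; (iii) combine (i) and (ii) by translating $\epsilon$-relative error on $\Zp$ into $O(\epsilon n)$-additive error on $\|b\|_1$.

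The main obstacle I anticipate is not conceptual but a matter of getting the quantifiers and constants to line up cleanly: one must ensure the hard instance has $\Zp(+\infty) = \varTheta(n)$ (so that relative and additive error are interchangeable up to constants) while keeping $n$ a free parameter that can be taken as large as needed for a fixed target $\epsilon$, and one must be careful that the lower bound is stated against \emph{worst-case} $H$ rather than a fixed one — which is automatic once we exhibit the distribution over instances and invoke Yao. A minor secondary point is to make the two-point indistinguishability argument robust to adaptive queries; the standard observation that, against a uniformly random weight-$(n/2 \pm \epsilon n)$ string, after $q$ non-adaptive-equivalent queries the posterior on the weight is essentially unchanged when $q = o(1/\epsilon^2)$, handles this, since adaptivity gives no advantage when every unqueried bit is exchangeable.
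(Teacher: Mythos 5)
Your proposal follows essentially the same route as the paper: encode a binary string $b$ into a Hamiltonian so that $\Zp(+\infty)$ is an affine function of $\|b\|_1$ with $\Zp(+\infty)=\varTheta(n)$, observe that one classical query to $H$ reveals one bit of $b$, and invoke a two-point indistinguishability bound showing that resolving the weight to additive precision $\varTheta(\epsilon n)$ costs $\varOmega(1/\epsilon^2)$ queries. The paper's version is marginally simpler on two counts: it uses the one-element-per-coordinate gadget $H_b(x)=1-b_x$ (so $\Zp(+\infty)=\|b\|_1$, which is $\varTheta(N)$ with high probability via a Chernoff bound, rather than deterministically affine as in your two-element gadget), and it draws the bits of $b$ i.i.d.\ Bernoulli with parameter $1/2\pm\delta$ rather than uniformly from a fixed-weight slice, so each fresh query is literally an independent biased coin flip and the $\varOmega(1/\delta^2)$ distinguishing bound can be cited directly (Claim D.1 of \cite{ge2020estimating}) without any exchangeability or adaptivity argument.

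One concrete quantitative slip: you claim the bound ``holds uniformly in $n$ once $n\gtrsim 1/\epsilon$.'' That is not enough. With $n=\varTheta(1/\epsilon)$ an algorithm can query every bit and compute $\|b\|_1$ exactly in $n=\varTheta(1/\epsilon)$ queries, so no $\varOmega(1/\epsilon^2)$ lower bound can hold on such instances; the $\varOmega(n^2/t^2)$ bound you invoke is only valid while it does not exceed the trivial cap of $n$ queries, i.e.\ for $t\gtrsim\sqrt{n}$, which with $t=\varTheta(\epsilon n)$ forces $n\gtrsim 1/\epsilon^2$. The paper makes exactly this choice ($N\ge 20/\delta^2$). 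With that correction, and with either the i.i.d.\ trick or your exchangeability argument to dispose of adaptivity, your argument goes through.
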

The classical lower bound is weaker than the lower bound of $\varOmega({\log |\chi|/\epsilon^2})$ given in \cite{harris2020parameter}, but the proof is simple. The results of \cite{harris2020parameter} are very recent and we were not aware of them when working on our result, hence we did not try to quantize them. It is possible that their construction yields a stronger quantum lower bound than Thm.~\ref{thm:main}.

\section{Quantum lower bound} \label{sec:quantum_lower_bound}
In this section we establish the quantum lower bound on the number of reflection operators to estimate the partition function. In Section \ref{sec:qu_preliminaries} we first introduce the quantum query complexity of the Hamming weight problem, and discuss fixed-point quantum search. Then, we prove the lower bound (\cref{thm:main}) by constructing a finite set $\chi$ and a Hamiltonian that encode the Hamming weight problem, see Section \ref{sec:construction}.

\subsection{Preliminaries}\label{sec:qu_preliminaries}
Our starting point is a well-known quantum query bound for the problem of estimating the Hamming weight of an unknown binary string. This is discussed in \cref{prop:hamming_weight} and \cref{cor:hamming_weight}, and used in the proof of \cref{thm:main}.

\begin{proposition}[Quantum query tight bound for the Hamming weight problem \cite{nayak1999quantum} \cite{childs2022quantum}]\label{prop:hamming_weight}
    Let $b=\{b_1,b_2,\cdots,b_N\}\in\{0,1\}^N$ be an $N$-digit binary string, and let $\ell$, $\ell^{\prime}$  be integers such that  $0 \leq \ell<\ell^{\prime} \leq N$. Define the partial Boolean function  $f_{\ell, \ell^{\prime}}$  on a subset of $\{0,1\}^{N}$ as follows:
    $$
        f_{\ell, \ell^{\prime}}(b)=\left\{\begin{array}{ll}
            0 & \text { if }\|b\|_{1}=\ell            \\
            1 & \text { if }\|b\|_{1}=\ell^{\prime} .
        \end{array}\right.
    $$
    Let $m \in\left\{\ell, \ell^{\prime}\right\}$  be such that  $\left|\frac{N}{2}-m\right|$  is maximized, and let  $\Delta=\ell^{\prime}-\ell$. Then, given the quantum query oracle
    \begin{align}
        O_{b}\ket{i}\ket{y}=\ket{i, y \oplus b_{i}}\quad \forall i \in[N], y \in\{0,1\},
        \label{eq:oracle_O_b}
    \end{align}
    the number of queries to $O_b$ for computing the function  $f_{\ell, \ell^{\prime}}$  is $\varTheta(\sqrt{N / \Delta}+\sqrt{m(N-m)} / \Delta)$ .
\end{proposition}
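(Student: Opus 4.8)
The plan is to prove matching ${\cal O}$ and $\varOmega$ bounds on the number of queries to $O_b$ and then combine them into the stated $\varTheta$.

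For the upper bound I would reduce to quantum approximate counting (amplitude estimation, in the style of Brassard et al.). A single query to $O_b$ sends the uniform superposition $\frac{1}{\sqrt N}\sum_{i\in[N]}\ket{i}\ket{0}$ to $\frac{1}{\sqrt N}\sum_{i\in[N]}\ket{i}\ket{b_i}$, so a measurement of the last register returns $1$ with probability $p:=\|b\|_1/N$. Amplitude estimation run on this one‑query subroutine with $M$ rounds outputs, with constant probability, $\tilde p$ with $|\tilde p-p|\le 2\pi\sqrt{p(1-p)}/M+\pi^2/M^2$. On the promise $p\in\{\ell/N,\ell'/N\}$ these two values differ by exactly $\Delta/N$, so it suffices to push the error below $\Delta/(2N)$; this holds once $M={\cal O}\lrb{\sqrt{N/\Delta}+\sqrt{p^{\star}(1-p^{\star})}\,N/\Delta}$, where $p^{\star}$ is whichever of $\ell/N,\ell'/N$ is closer to $1/2$. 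Finally, $\ell'(N-\ell')-\ell(N-\ell)=\Delta\,(N-\ell-\ell')$ gives $\bigl|\ell'(N-\ell')-\ell(N-\ell)\bigr|<\Delta N$, so by subadditivity of the square root $\sqrt{p^{\star}(1-p^{\star})}\,N\le\sqrt{m(N-m)}+\sqrt{\Delta N}$, and hence the bound is ${\cal O}\lrb{\sqrt{N/\Delta}+\sqrt{m(N-m)}/\Delta}$ regardless of which of $\ell,\ell'$ ends up being $m$.

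For the lower bound I would use the polynomial method. If a quantum algorithm decides $f_{\ell,\ell'}$ with $T$ queries to $O_b$ and error at most $1/3$, then by the Beals--Buhrman--Cleve--Mosca--de~Wolf argument its acceptance probability is a multilinear polynomial $P(b)$ in $b_1,\dots,b_N$ of degree at most $2T$. Since $f_{\ell,\ell'}$ depends only on $\|b\|_1$ on its domain, symmetrizing $P$ over permutations of the coordinates yields a univariate real polynomial $q$ with $\deg q\le 2T$, $0\le q(t)\le1$ for every integer $t\in\{0,\dots,N\}$, $q(\ell)\le1/3$ and $q(\ell')\ge2/3$. By the mean value theorem $|q'(\xi)|\ge 1/(3\Delta)$ for some $\xi\in[\ell,\ell']$. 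Using the Ehlich--Zeller / Coppersmith--Rivlin estimate to pass from boundedness of $q$ at the integers $\{0,\dots,N\}$ to boundedness (up to a constant) on the interval $[0,N]$, two forms of Markov's inequality then finish the job: the Markov brothers' inequality gives $1/(3\Delta)\le{\cal O}\lrb{(\deg q)^2/N}$, hence $\deg q=\varOmega(\sqrt{N/\Delta})$; and Bernstein's inequality applied at $\xi$, after rescaling $[0,N]$ to $[-1,1]$, gives $1/(3\Delta)\le{\cal O}\lrb{\deg q/\sqrt{\xi(N-\xi)}}$, where, since $w\mapsto w(N-w)$ is concave, $\xi(N-\xi)\ge\min\{\ell(N-\ell),\ell'(N-\ell')\}=m(N-m)$, so $\deg q=\varOmega(\sqrt{m(N-m)}/\Delta)$. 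As $2T\ge\deg q$, this is exactly the claimed bound; it is essentially the polynomial lemma of Nayak and Wu, in the spirit of Paturi's theorem for symmetric functions. (One could instead invoke the quantum adversary method, but symmetrization is cleaner for this symmetric function.)

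The hard part will be the regime where $\Delta$ is small and $\ell,\ell'$ lie near $N/2$, so $\sqrt{m(N-m)}/\Delta$ dominates and may be as large as $\varTheta(N)$. Algorithmically this is where the ``variance'' term $\sqrt{p(1-p)}/M$ of amplitude estimation, rather than the $1/M^2$ term, is the bottleneck, and the analysis of this case needs care. On the lower‑bound side it is exactly where the Ehlich--Zeller transfer from integer points to the interval fails --- it controls only polynomials of degree ${\cal O}(\sqrt N)$ --- so when $\deg q\gg\sqrt N$ one cannot simply assert boundedness of $q$ on all of $[0,N]$. I would handle this by localizing: restrict attention to a subinterval of $[0,N]$ around $[\ell,\ell']$ of length $\varTheta\lrb{\xi(N-\xi)/N}$, on which $q$ is still a polynomial of degree $\le 2T$ bounded at its integer points, so if $2T$ were too small the interval inequalities would apply there and produce the same contradiction. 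This localization is the technical core; everything else is routine.
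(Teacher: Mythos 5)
The paper does not actually prove this proposition: it is imported as a known result of Nayak and Wu \cite{nayak1999quantum} (as restated in \cite{childs2022quantum}), and is used only as a black box to derive Corollary~\ref{cor:hamming_weight}. So there is no in-paper proof to compare against; what you have written is a reconstruction of the standard proof of the cited theorem. The outline is the right one. On the upper-bound side, amplitude estimation with the one-query Bernoulli subroutine gives exactly the claimed complexity, and your identity $\ell'(N-\ell')-\ell(N-\ell)=\Delta(N-\ell-\ell')$ correctly shows that swapping which of $\ell,\ell'$ plays the role of $m$ changes $\sqrt{m(N-m)}/\Delta$ by at most ${\cal O}(\sqrt{N/\Delta})$, so the statement is insensitive to that choice. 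On the lower-bound side, symmetrization to a univariate polynomial $q$ of degree at most $2T$, followed by Markov and Bernstein inequalities, is the Nayak--Wu/Paturi route.

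The one substantive gap is the one you flag yourself: when $\sqrt{m(N-m)}/\Delta\gg\sqrt{N}$, boundedness of $q$ at the integer points $\{0,\dots,N\}$ does not transfer to boundedness on the interval (the Ehlich--Zeller/Coppersmith--Rivlin transfer only controls degrees ${\cal O}(\sqrt{N})$), and your proposed fix --- localizing to a subinterval of length $\varTheta(\xi(N-\xi)/N)$ around $[\ell,\ell']$ --- is asserted rather than carried out. That step is precisely the content of Nayak and Wu's polynomial lemma (a degree lower bound for polynomials bounded on an integer lattice with a prescribed jump), i.e., it is the actual theorem being cited; without it your argument yields only $\varOmega(\sqrt{N/\Delta})$ together with the Bernstein bound in the low-degree regime. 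Note also that the paper's sole application of the proposition, Corollary~\ref{cor:hamming_weight} with $\ell,\ell'=N(1/2\pm\delta)$, lives exactly in this hard regime where $\sqrt{m(N-m)}/\Delta=\varTheta(1/\delta)$ dominates $\sqrt{N/\Delta}$, so the part you leave as a sketch is the part the paper actually relies on. For a self-contained treatment you would need to prove that localization lemma in full or cite it explicitly rather than re-derive it.
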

\begin{corollary} \label{cor:hamming_weight}
    Let $\delta >0$ such that $2N\delta\geq 1$. Assume that we are given quantum query access $O_{b}$ to $b \in \{0,1\}^N$, where $\|b\|_1=\ell=N(\frac{1}{2}+\delta)$ or $\|b\|_1=\ell'=N(\frac{1}{2}-\delta)$. Then the quantum query complexity of computing the function $f_{\ell, \ell^{\prime}}$ (i.e., distinguishing the two cases for $\|b\|_1$) is $\varTheta({1/\delta})$.
\end{corollary}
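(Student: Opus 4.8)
The plan is to derive the corollary as a direct specialization of \cref{prop:hamming_weight}. First I would observe that the two hypotheses on $\|b\|_1$ are precisely the two branches of the partial Boolean function $f_{\ell,\ell'}$ of \cref{prop:hamming_weight}, up to the fact that here $\ell=N(\tfrac12+\delta)>N(\tfrac12-\delta)=\ell'$, so I relabel $\ell\leftrightarrow\ell'$ to match the convention $\ell<\ell'$ of the proposition; this changes neither the Boolean function being computed nor the number of queries needed to compute it. It is implicit in the statement that $\delta\le\tfrac12$ (otherwise $\ell$ or $\ell'$ falls outside $\{0,\dots,N\}$), and the hypothesis $2N\delta\ge1$ says exactly that $\Delta:=|\ell-\ell'|=2N\delta\ge1$. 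I would also take it as understood (it is harmless in the applications, where $N$ and $\delta$ are ours to choose) that $N(\tfrac12\pm\delta)$ are integers; otherwise one rounds, which perturbs $\Delta$ and $m$ by $\mathcal{O}(1)$ and leaves the asymptotics untouched.

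Next I would evaluate the quantities entering the $\varTheta$-estimate of \cref{prop:hamming_weight}. Both candidate values satisfy $\bigl|\tfrac N2-\ell\bigr|=\bigl|\tfrac N2-\ell'\bigr|=N\delta$, so either one may serve as $m$, and $m(N-m)=N^2\bigl(\tfrac14-\delta^2\bigr)$. Substituting $\Delta=2N\delta$, the bound $\sqrt{N/\Delta}+\sqrt{m(N-m)}/\Delta$ collapses after cancellation to
\begin{align*}
    \sqrt{\frac{1}{2\delta}}+\frac{\sqrt{1/4-\delta^2}}{2\delta},
\end{align*}
and it remains to show this is $\varTheta(1/\delta)$. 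For the upper bound, $\delta\le\tfrac12$ gives $\sqrt{1/(2\delta)}\le 1/(2\delta)$ and $\sqrt{1/4-\delta^2}/(2\delta)\le 1/(4\delta)$, so the sum is $\mathcal{O}(1/\delta)$. For the lower bound I would split on the size of $\delta$: if $\delta\le\tfrac14$ then $\sqrt{1/4-\delta^2}\ge\tfrac{\sqrt3}{4}$, so the second term alone is already $\varOmega(1/\delta)$; if $\tfrac14<\delta\le\tfrac12$ then $1/\delta=\varTheta(1)$ and the claimed bound follows from the trivial fact that any algorithm distinguishing two fixed distinct inputs must make at least one query (equivalently, the first term $\sqrt{1/(2\delta)}\ge1$ in this range).

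I do not anticipate a genuine obstacle here: the argument is a substitution into \cref{prop:hamming_weight} followed by elementary inequalities. The only points that require a moment's care are the degenerate regime $\delta\to\tfrac12$, where $\sqrt{m(N-m)}/\Delta\to0$ and the estimate must be carried by the remaining term or by the trivial one-query bound, and the tacit integrality of $\ell$ and $\ell'$; neither affects the $\varTheta(1/\delta)$ conclusion.
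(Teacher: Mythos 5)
Your proposal is correct and follows essentially the same route as the paper: a direct substitution of $\Delta=2N\delta$ and $m\in\{N(\tfrac12\pm\delta)\}$ into \cref{prop:hamming_weight}, followed by simplification to $\varTheta(1/\delta)$. You are somewhat more careful than the paper about the relabeling $\ell\leftrightarrow\ell'$, integrality, and the degenerate regime $\delta$ near $\tfrac12$, but these refinements do not change the argument.
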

\begin{proof}
    Applying \cref{prop:hamming_weight}, $m \in\{N(\frac{1}{2}+\delta) ,N(\frac{1}{2}-\delta)\}$ and $\Delta =2 N\delta$. Thus, the quantum query complexity  is  $\varTheta(\sqrt{N / \Delta}+\sqrt{m(N-m)} / \Delta)=\varTheta(\sqrt{\frac{N}{N\delta}}+\sqrt{(\frac{N}{2}+\delta N)(\frac{N}{2}-N\delta)} / (2N\delta))= \varTheta({1/\delta})$.
\end{proof}

We also need a result about fixed-point quantum search, a generalization of standard Grover search \cite{grover1996fast} introduced in \cite{yoder2014fixed}. We summarize the main result of \cite{yoder2014fixed} in \cref{prop:fixed-point}.
\begin{proposition} [Fixed-point quantum search \cite{yoder2014fixed}]\label{prop:fixed-point}
    Let $A$ be a unitary such that $\ket{s}=A\ket{0}^{\otimes n}$. Let $\ket{T}$ be a target state such that $\braket{T}{s}=\sqrt{\lambda} e^{\im \xi}$ for some $\xi$ and $\lambda > 0$. Let $U$ be a unitary such that $U \ket{T} \ket{y} = \ket{T}\ket{y \oplus 1}$ and $U \ket{T^{\perp}}\ket{y} = \ket{T^{\perp}}\ket{y}$ for $\braket{T}{T^{\perp}} = 0$. Let $\eta \in (0,1]$. There exists a quantum circuit $F$, consisting of
    \begin{equation*}
        L={\cal O}\left(\frac{\log (2/\eta )}{\sqrt{\lambda}}\right)
    \end{equation*}
    calls to $A,A^\dagger,U$ and efficiently implementable $n$-qubit gates, such that $|\bra{T}F\ket{s}|^2 \ge 1- \eta^2$.
\end{proposition}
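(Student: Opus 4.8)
The statement is the fixed-point search guarantee of Yoder, Low and Chuang~\cite{yoder2014fixed}, so the plan is to reconstruct their circuit and its analysis. First I would restrict everything to the two-dimensional subspace $\mathcal{H}_2=\mathrm{span}\{\ket{T},\ket{T^\perp}\}$, where $\ket{T^\perp}$ is the normalized component of $\ket s$ orthogonal to $\ket T$, so that $\ket s=\sqrt{\lambda}\,e^{\im\xi}\ket T+\sqrt{1-\lambda}\,\ket{T^\perp}$. The two primitives the circuit needs are selective phase shifts. The target phase shift $S_T(\beta)$, which multiplies $\ket T$ by $e^{\im\beta}$ and fixes every orthogonal vector, is built from a constant number of calls to $U$: apply $U$ to write the ``in-target'' bit onto an ancilla $\ket y$, apply $e^{\im\beta}$ conditioned on that bit, and apply $U$ again to uncompute. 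The source phase shift $S_s(\alpha)=A\,S_0(\alpha)\,A^\dagger$, where $S_0(\alpha)$ multiplies $\ket{0}^{\otimes n}$ by $e^{\im\alpha}$ and fixes its orthogonal complement, multiplies $\ket s$ by $e^{\im\alpha}$ and uses one call each to $A$ and $A^\dagger$ plus an efficiently implementable multiply-controlled phase gate. Thus each generalized Grover iterate $G(\alpha,\beta)=-S_s(\alpha)\,S_T(\beta)$ costs $\mathcal{O}(1)$ calls to $A,A^\dagger,U$, and the whole circuit $F$ a number proportional to the iteration count $L$.

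Second, I would show that, restricted to $\mathcal{H}_2$, each iterate $G(\alpha,\beta)$ acts as an $SU(2)$ rotation whose matrix entries are linear in $\cos\alpha,\sin\alpha,\cos\beta,\sin\beta$ and in the amplitude $a=\sqrt{\lambda}$. Composing $l$ such iterates therefore implements, in the quantum signal processing sense, a degree-$L$ polynomial transformation of $a$ (with $L=2l+1$), and the failure amplitude $\bra{T^\perp}F\ket{s}$ is a polynomial $p(a)$ of parity matching $L$. The design goal is to choose the $2l$ phases $\{\alpha_j,\beta_j\}$ so that $|p(a)|$ is uniformly small on the \emph{whole} window $a\in[\sqrt{\lambda},1]$, rather than only at the single value where ordinary Grover search peaks; this monotone, overshoot-free behaviour is exactly what distinguishes fixed-point search.

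Third, I would exhibit the explicit phase schedule of~\cite{yoder2014fixed}: setting $\gamma^{-1}=T_{1/L}(1/\eta)$ (the degree-$1/L$ Chebyshev polynomial, analytically continued above $1$) and
\begin{equation*}
    \alpha_j=-\beta_{l+1-j}=2\cot^{-1}\!\left(\tan(2\pi j/L)\,\sqrt{1-\gamma^2}\right), \qquad j=1,\dots,l,
\end{equation*}
and then verifying the central identity
\begin{equation*}
    1-|\bra{T}F\ket{s}|^2=\eta^2\,T_L^2\!\left(T_{1/L}(1/\eta)\,\sqrt{1-\lambda}\right).
\end{equation*}
Since $|T_L|\le 1$ on $[-1,1]$, the failure probability is at most $\eta^2$ as soon as the argument of $T_L$ lies in $[-1,1]$, i.e. as soon as $T_{1/L}(1/\eta)\le 1/\sqrt{1-\lambda}$. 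Using $T_{1/L}(1/\eta)=\cosh\!\left(\tfrac1L\,\mathrm{arccosh}(1/\eta)\right)$ and $\mathrm{arccosh}(1/\eta)\le\log(2/\eta)$, this condition is met once $L=\mathcal{O}\!\left(\log(2/\eta)/\sqrt{\lambda}\right)$, which is the claimed count; rounding $L$ up to the nearest odd integer only affects constants.

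The main obstacle is the third step: establishing the Chebyshev identity and checking that the stated phases produce it. The algebra lives in $SU(2)$, but the appearance of the fractional-degree polynomial $T_{1/L}$ and the verification that a product of $l$ rotations collapses to $T_L$ evaluated at a rescaled argument is delicate; it is most cleanly handled either by the recursive amplitude-amplification relations used in~\cite{yoder2014fixed} or, equivalently, by invoking the correspondence between phase sequences and achievable polynomials from quantum signal processing together with the optimality of the amplification polynomial $1-\eta^2 T_L^2(\cdot)$. The remaining steps (constructing the two phase shifts and deriving the asymptotic bound on $L$) are routine once this identity is in hand.
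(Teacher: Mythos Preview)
The paper does not prove this proposition: it is quoted as a known result from \cite{yoder2014fixed} and used as a black box in the proof of \cref{thm:main_2}. Your proposal goes well beyond what the paper does, reconstructing the original Yoder--Low--Chuang argument (restriction to the two-dimensional invariant subspace, building the selective phase shifts from $U$ and $A,A^\dagger$, and the Chebyshev identity for the failure probability); this sketch is correct and faithful to~\cite{yoder2014fixed}, but for the purposes of this paper a citation suffices.
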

Note that the above proposition implies that we construct the target state with probability at least $1 - \eta^2$.

\subsection{Construction of the Hamiltonian}\label{sec:construction}
We prove \cref{thm:main} by constructing a specific finite set $\chi$ and a Hamiltonian in such a way that we can reduce the problem of computing $f_{\ell,\ell'}$ to the problem of partition function estimation. The reduction gives us a lower bound on the number of queries to the Hamiltonian $H$. We then use fixed-point quantum search to convert this lower bound to a lower bound on the number of reflections through Gibbs states of the form $\ket{\mu_\beta}$. We state a more accurate version of \cref{thm:main} here:
\begin{theorem}[Quantum lower bound for estimating partition function]\label{thm:main_2}
    Let $H:\chi\to \{0,1,\cdots,n\}$ be a Hamiltonian and let $\Zp(\beta)$ be the corresponding partition function. Let ${\cal A}$ be the class of quantum algorithms that estimate $\Zp$ by reflecting through a Gibbs state at arbitrary inverse temperatures $\beta$ at most ${\cal O}(\log^2 |\chi|/\epsilon)$ times, and that query $H$ only to perform such reflections. Any quantum algorithm from the class ${\cal A}$ that estimates $\hat{Z}$ satisfying $|\hat{Z}-\Zp(+\infty)|\leq \epsilon \cdot\Zp(+\infty)$ requires $\varOmega\lrb{{1}/{\epsilon}}$ reflections.
\end{theorem}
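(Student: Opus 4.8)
The plan is to reduce the Hamming‑weight decision problem of \cref{cor:hamming_weight} to partition‑function estimation, and then to trade queries to the Hamming oracle $O_b$ for reflections through Gibbs states by means of fixed‑point search (\cref{prop:fixed-point}). For the instance: given $\epsilon$, choose $N=\varTheta(1/\epsilon)$ and $\delta=\varTheta(\epsilon)$ with $2N\delta\ge 1$ and $\ell=N(\tfrac12+\delta)$, $\ell'=N(\tfrac12-\delta)$ integers, let $b\in\{0,1\}^N$ be unknown with $\|b\|_1\in\{\ell,\ell'\}$, take $\chi=[N]$ (embedded into $\{0,1\}^{\lceil\log_2 N\rceil}$, unused strings assigned a large Hamiltonian value so they drop out of $\Zp(+\infty)$), and set $H(i)=b_i$. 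Then $\Zp(+\infty)=|\{i:b_i=0\}|=N-\|b\|_1$, which is $N(\tfrac12-\delta)$ if $\|b\|_1=\ell$ and $N(\tfrac12+\delta)$ if $\|b\|_1=\ell'$. Since $\epsilon<2\delta$ one checks $(1+\epsilon)N(\tfrac12-\delta)<(1-\epsilon)N(\tfrac12+\delta)$, so any $\hat Z$ with $|\hat Z-\Zp(+\infty)|\le\epsilon\,\Zp(+\infty)$ separates the two cases; hence a correct partition‑function estimator decides $f_{\ell,\ell'}$.

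Next, I must implement the reflections $R_\beta=2\ketbra{\mu_\beta}{\mu_\beta}-I$ using $O_b$, since these are the only way the algorithm accesses $H$. Build a state‑preparation unitary $A_\beta$ that on $\ket{0}$ creates the uniform superposition $\tfrac1{\sqrt N}\sum_i\ket i$ by Hadamards, queries $O_b$ into an ancilla register, applies the single‑qubit controlled rotation sending a flag qubit to $e^{-\beta b_i/2}\ket{0}+\sqrt{1-e^{-\beta b_i}}\ket{1}$, and uncomputes the ancilla with a second query; this produces $\sqrt{\lambda_\beta}\,\ket{\mu_\beta}\ket{0}_{\mathrm{flag}}+\sqrt{1-\lambda_\beta}\,\ket{\perp}\ket{1}_{\mathrm{flag}}$ with $\lambda_\beta=\Zp(\beta)/N$, using only $2$ queries to $O_b$. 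Crucially $\lambda_\beta\ge \Zp(+\infty)/N=(N-\|b\|_1)/N\ge \tfrac12-\delta=\varOmega(1)$, uniformly in $\beta$ (here the choice $H(i)=b_i$ and the fact $\Zp(+\infty)=\varOmega(N)$ are used). Feeding $A_\beta$, the target ``flag$\,=0$'', and the trivial flag reflection into \cref{prop:fixed-point} yields, for any $\eta\in(0,1]$, a circuit preparing $\ket{\mu_\beta}$ up to error $\eta$ with ${\cal O}(\log(1/\eta)/\sqrt{\lambda_\beta})={\cal O}(\log(1/\eta))$ calls to $A_\beta,A_\beta^\dagger$, hence ${\cal O}(\log(1/\eta))$ queries to $O_b$; conjugating the computational‑basis reflection $2\ketbra{0}{0}-I$ by this circuit gives a unitary $\tilde R_\beta$ with $\|\tilde R_\beta-R_\beta\|={\cal O}(\eta)$ acting correctly on the register while keeping the flag in $\ket{0}$.

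To conclude, suppose an algorithm in $\mathcal A$ uses $T={\cal O}(\log^2|\chi|/\epsilon)$ reflections and estimates $\Zp(+\infty)$ correctly. Replacing every reflection by $\tilde R_\beta$ with $\eta=\varTheta(1/T)$ perturbs the output distribution by ${\cal O}(T\eta)={\cal O}(1)$, which can be made smaller than any fixed constant, so the perturbed procedure still decides $f_{\ell,\ell'}$ with probability bounded away from $1/2$; it issues ${\cal O}(T\log(1/\eta))={\cal O}(T\log(1/\epsilon))$ queries to $O_b$ (using $|\chi|=N=\varTheta(1/\epsilon)$ and the cap on $T$). By \cref{cor:hamming_weight} this count must be $\varOmega(1/\delta)=\varOmega(1/\epsilon)$, which forces $T=\varOmega(1/\epsilon)$; the logarithmic overhead of the simulation is exactly what the restriction to the class $\mathcal A$ is designed to absorb.

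The main obstacle is the second step: converting an (approximate) reflection through a Gibbs state into a small number of $O_b$‑queries. Two points need care — verifying that the success amplitude $\lambda_\beta$ of the block encoding is bounded below by a universal constant \emph{for every} inverse temperature $\beta$ (this is where the specific Hamiltonian and the lower bound $\Zp(+\infty)=\varOmega(N)$ enter), and checking that the ``approximately prepare $\ket{\mu_\beta}$, then reflect'' construction really produces an operator ${\cal O}(\eta)$‑close to $R_\beta$ once the flag and ancilla registers are carried along (they return to $\ket 0$ between reflections). The residual source of loss is the precision $\eta=\varTheta(1/T)$ demanded by the hybrid argument, which costs a $\log(1/\eta)$ factor of $O_b$‑queries per reflection; bounding $T$ by ${\cal O}(\log^2|\chi|/\epsilon)$ keeps $\log(1/\eta)=\varTheta(\log(1/\epsilon))$ and is what lets the reduction close. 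I would also expect to have to handle minor technicalities (embedding $[N]$ into qubit strings, ensuring $\ell,\ell'$ are integers, and the adaptive‑vs‑fixed form of the algorithm via a standard minimax argument), none of which should affect the scaling.
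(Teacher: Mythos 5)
Your overall strategy is the same as the paper's: reduce the Hamming-weight decision problem of \cref{cor:hamming_weight} to estimating $\Zp(+\infty)$ for a Hamiltonian encoding the bits of $b$, then convert the $\varOmega(1/\delta)$ query bound on $O_b$ into a reflection bound by showing that each reflection through $\ket{\mu_\beta}$ can be simulated with few $O_b$-queries via fixed-point search. Your state-preparation unitary $A_\beta$ (coherent rotation by $e^{-\beta b_i/2}$, with success amplitude $\lambda_\beta=\Zp(\beta)/N\ge\tfrac12-\delta$ uniformly in $\beta$) is in fact cleaner than the paper's: because the amplitude attached to each $\ket{i}$ is determined locally by $b_i$, you never need to know $\|b\|_1$, and you avoid the paper's entire analysis of the overlap between the ideal Gibbs state and the one prepared with the wrong value of $\|b\|_1$.

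The gap is in the final accounting. You correctly note that a rigorous hybrid argument requires each simulated reflection to be $\eta$-close in operator norm with $\eta=\varTheta(1/T)$, which costs $\varTheta(\log(1/\eta))=\varTheta(\log(1/\epsilon))$ queries per reflection. But then the total query count is ${\cal O}(T\log(1/\epsilon))$, and the inequality $T\log(1/\epsilon)=\varOmega(1/\epsilon)$ yields only $T=\varOmega\lrb{1/(\epsilon\log(1/\epsilon))}$, not the claimed $\varOmega(1/\epsilon)$. Your remark that the restriction to the class ${\cal A}$ ``absorbs'' the logarithmic overhead misreads the role of that restriction: the cap $T={\cal O}(\log^2|\chi|/\epsilon)$ is an upper hypothesis used in the paper to control the accumulation of the ${\cal O}(\delta^2)$ per-reflection failure probability arising from not knowing $\|b\|_1$; it cannot upgrade a $\varOmega(1/(\epsilon\log(1/\epsilon)))$ conclusion to $\varOmega(1/\epsilon)$. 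The paper sidesteps this by fixing $\eta$ to a constant, so that each reflection costs ${\cal O}(\sqrt{N/\|b\|_1})={\cal O}(1)$ queries and the ratio of the query lower bound to the per-reflection cost is $\varOmega(1/\epsilon)$ exactly (at the price of being informal about how constant per-reflection error accumulates over $T$ uses). As written, your argument establishes the weaker bound $\varOmega(1/(\epsilon\log(1/\epsilon)))$; to recover $\varOmega(1/\epsilon)$ you would need either to take $\eta$ constant and justify why the accumulated deviation does not break the reduction, or to make the per-reflection simulation cost ${\cal O}(1)$ queries independently of the target precision.
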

The restriction that reflections are invoked at most ${\cal O}(\log^2 |\chi|/\epsilon)$ times is technical: known quantum algorithms for partition functions, e.g., \cite{montanaro2015quantum,harrow2020adaptive,arunachalam2022simpler,cornelissen2023sublinear}, achieve this scaling (possibly up to polylogarithmic factors, which we neglect for simplicity) or better.
\begin{proof}
    Let $b=\{b_1,b_2,\cdots,b_N\}\in\{0,1\}^N$. Define $\chi=\{1,2,\cdots, N\}$, and $H_b(x)$ as:
    \begin{equation*}
        H_b(x)= 1-b_x, \qquad \text{ for } x \in \chi.
    \end{equation*}
    Then $H_b(x) = 0$ if and only if $b_x = 1$, and therefore
    \begin{align}
        \Zp(+\infty)=\lim _{\beta\to+\infty}\sum _{x\in \chi}e^{-\beta H_b(x)}=\|b\|_1.
        \label{eq:Zp_inf_b}
    \end{align}
    Let $O_{H}$ be a quantum oracle to evaluate $H_b(x)$, defined as:
    \begin{align}
        O_{H}\ket{x}\ket{y}=\ket{x}\ket{y\oplus {H_b}(x)}, \text{ for }\forall x\in \chi, y\in \{0,1\}.
        \label{eq:oracle_O_H}
    \end{align}
    Note that we can simulate one query to $O_H$ with one query to $O_b$. (We drop the double-subscript $b$ for $O_{H_b}$ or whenever the particular $b$ under consideration is clear from the context.)

    Let $b^{(1)}$ and $b^{(2)}$ be two binary strings satisfying $\|b^{(1)}\|_1=N(\frac{1}{2}-\delta)$ and $\|b^{(2)}\|_1=N(\frac{1}{2}+\delta)$. Given query access to $b^{(1)}$ or $b^{(2)}$, the partition function $\Zp(\infty)$ for Hamiltonians $H_{b^{(1)}}, H_{b^{(2)}}$ allows us to compute the Hamming weight of the corresponding strings, and therefore, estimating $\Zp(\infty)$ with sufficient precision allows us to distinguish whether we were given $b^{(1)}$ or $b^{(2)}$. If we want to distinguish these strings, it is sufficient to estimate $\Zp(\infty)$ with precision $\epsilon = \delta$, because:
    \begin{align*}
        \frac{1}{2}\frac{\|b^{(2)}\|_1-\|b^{(1)}\|_1}{\Zp(+\infty)} \ge \frac{1}{2}\frac{2N\delta}{N\left(\frac{1}{2}+\delta\right)} \ge \delta 
    \end{align*}
    for any $\delta < 1/2$.
    We fix $\epsilon = \delta$ for the rest of this proof. By \cref{cor:hamming_weight}, distinguishing $b^{(1)}$ and $b^{(2)}$ requires at least $\varOmega(1/\delta)=\varOmega(1/\epsilon)$ queries to $O_b$ \eqref{eq:oracle_O_b}. Since one query to $O_H$ takes only one query to $O_b$, estimating $\Zp(\infty)$ with precision $\epsilon$ also takes $\varOmega(1/\epsilon)$ queries to $O_H$.

    We now convert the lower bound for the number of queries to $O_H$ to a lower bound for the number of reflections through Gibbs states $\ket{\mu_\beta}$. We do so by giving an upper bound on the number of queries to $O_H$ that allow us to construct $\ket{\mu_\beta}$ from scratch, and it is well known that if we can construct a state from the all-zero state, we can also reflect through the desired state by using the reflection through the all-zero state (see, e.g., Grover search \cite{grover1996fast} or amplitude amplification \cite{brassard2002quantum}). We use  fixed-point quantum search to construct $\ket{\mu_\beta}$ using $O_H$, relying on \cref{prop:fixed-point}.

    By definition:
    \begin{align}
        \ket{\mu_{\beta}}=\frac{1}{\sqrt{\Zp({\beta})}} \left (\overbrace{ \sum_{x : b_x=1} \ket{x} }^{\|b\|_1 \text{ states}} ~+~\overbrace{\sum_{x : b_x=0} \sqrt{e^{-\beta}} \ket{x}}^{ N - \|b\|_1 \text{ states}}  \right),
        \label{eq:mu_beta}
    \end{align}
    and $\Zp({\beta})=\|b\|_1+(N-\|b\|_1)e^{-\beta}$ is the partition function at inverse temperature $\beta$.
    Setting the target state $\ket{T}$ of \cref{prop:fixed-point} to be
    $$\ket{T}=\frac{1}{\sqrt{\|b\|_1}}\sum_{x : b_x=1} \ket{x} = \ket{\mu_{+\infty}},$$
    it is immediate to observe that the marking oracle $U$ of \cref{prop:fixed-point} can be implemented with a single call to $O_H$ \eqref{eq:oracle_O_H}, as they have the same action of flipping an ancilla qubit when applied onto $\ket{T}$.

    We first give an upper bound for the number of queries to $O_H$ to construct $\ket{\mu_{+\infty}}$ using \cref{prop:fixed-point}, and subsequently, we prove that with the same cost we can construct $\ket{\mu _{\beta}}$ for any other $\beta$. For now we assume that $\|b\|_1$ is know, and later we show that an unknown value of $\|b\|_1$ has negligible impact. Our initial state $\ket{s}$ is the uniform superposition of all states $\ket{x}$:
    \begin{align*}
        \ket{s}=\frac{1}{\sqrt{N}}\sum_{x=1}^{N}\ket{x}.
    \end{align*}
    The overlap between $\ket{s}$ and $\ket{T}$ is:
    \begin{align*}
        \braket{T}{s} =\sqrt{\frac{\|b\|_1}{N}} =\sqrt{\lambda}.
    \end{align*}
    Choosing $\eta_{+\infty}$ to obtain a constant success probability $1-\eta_{+\infty}^2$ for \cref{prop:fixed-point}, say, 99.9\%, by \cref{prop:fixed-point} the number of queries to $O_H$ to construct $\ket{T}$ is
    \begin{align*}
        L_{+\infty}={\cal O}\left( \log\lrb{\frac{2}{\eta_{+\infty}}} \sqrt{\frac{N}{\|b\|_1}}\right)={\cal O}\left( \sqrt{\frac{N}{\|b\|_1}}\right).
    \end{align*}
    Constructing $\ket{\mu_\beta}$ for any other $\beta < \infty$ does not increase the query cost. Rewrite \eqref{eq:mu_beta} as:
    \begin{align*}
        \ket{\mu_\beta}=\frac{1}{\sqrt{\Zp(\beta)}}\lrb{\sqrt{\|b\|_1}\ket{T}+\sqrt{\lrb{N-\|b\|_1}e^{-\beta}}\ket{T^{\perp}}}.
    \end{align*}
    In the context of quantum singular value transformation / quantum signal processing, we can use the same odd polynomial as for fixed-point quantum search, but rather than aiming for a final amplitude value of $\ket{T}$ in the interval $[1-\eta, 1]$, we scale it down to an interval of size $O(\eta)$ around the desired target $\sqrt{\|b\|_1}/\sqrt{\Zp(\beta)}$, see \cite{low2017hamiltonian,gilyen2019quantum}. 
    
    One potential issue still remains regarding the construction of a reflection via state preparation: we do not know $\|b\|_1$, therefore the state preparation above cannot be exact. We show that the overlap between the ideal state (assuming knowledge of $\|b\|_1$) and the imperfect state (assuming no knowledge of $\|b\|_1$) is $1-{\cal O}(\delta)$, and this does not significantly affect the final result.
    
    Suppose we have $\|b^{(1)}\|_1=N\lrb{\frac{1}{2}-\delta }$, but we implement the reflection using the above state-preparation algorithm with the wrong value $\|b^{(2)}\|_1=N\lrb{\frac{1}{2}+\delta }$. We can still construct $\ket{\mu_{\beta}}$ for any $\beta$ with probability at least $1-4\delta^2$, as demonstrated in the following calculation.
    The ideal state is
    $$\ket{\mu_{\beta}}=\frac{1}{\sqrt{N\lrb{\frac{1}{2}-\delta }+N\lrb{\frac{1}{2}+\delta }e^{-\beta}}}\lrb{\sqrt{N\lrb{\frac{1}{2}-\delta }}\ket{T}+\sqrt{{N\lrb{\frac{1}{2}+\delta }}e^{-\beta}}\ket{T^{\perp}}},$$
    and the imperfect state is
    $$\ket{\mu_{\beta}^{\text{w}}}=\frac{1}{\sqrt{N\lrb{\frac{1}{2}+\delta }+N\lrb{\frac{1}{2}-\delta }e^{-\beta}}}\lrb{\sqrt{N\lrb{\frac{1}{2}+\delta }}\ket{T}+\sqrt{{N\lrb{\frac{1}{2}-\delta }}e^{-\beta}}\ket{T^{\perp}}},$$
    where $\ket{T}=\frac{1}{\sqrt{N\lrb{\frac{1}{2}-\delta }}}\sum_{x : b^{(1)}_x=1} \ket{x}$. The overlap between the two states is
    \begin{align}
        \left|\braket{\mu_{\beta}}{\mu_{\beta}^{\text{w}}}\right| & =\frac{\sqrt{N^2\lrb{\frac{1}{4}-\delta^2}} +  \sqrt{N^2 (\frac{1}{4}-\delta^2)e^{-2\beta}}}{\sqrt{N\lrb{\frac{1}{2}-\delta }+N\lrb{\frac{1}{2}+\delta }e^{-\beta}}\sqrt{N\lrb{\frac{1}{2}+\delta }+N\lrb{\frac{1}{2}-\delta }e^{-\beta}}}\nonumber\\
        & =\frac{\sqrt{{\frac{1}{4}-\delta^2}}  \lrb{1+e^{-\beta}}}{\sqrt{\frac{1}{4} \lrb{1+e^{-\beta}}^2- \delta^2 \lrb{1-e^{-\beta}}^2}}\nonumber\\
        & = \frac{\sqrt{{\frac{1}{4}-\delta^2} }}{\sqrt{\frac{1}{4}-\delta^2 \lrb{\frac{1-e^{-\beta}}{1+e^{-\beta}}}^2}}.
        \label{eq:overlap_1}
    \end{align}
    This is an increasing function of $\beta$. For $\beta =0$ we have $\left|\braket{\mu_{0}}{\mu_{0}^{\text{w}}}\right| =\sqrt{1-4\delta^2}$. Hence, we can construct $\ket{\mu_{\beta}}$ for any $\beta$ with probability at least $1-4\delta^2$. (In fact, for $\beta \to +\infty$, we have $\left|\braket{\mu_{+\infty}}{\mu_{+\infty}^{\text{w}}}\right| =1$.) Suppose now we have $\|b^{(2)}\|_1=N\lrb{\frac{1}{2}+\delta }$, but we implement the state preparation with the wrong value $\|b^{(1)}\|_1=N\lrb{\frac{1}{2}-\delta }$. \eqref{eq:overlap_1} still holds, and we can construct $\ket{\mu_{\beta}}$ for any $\beta$ with probability at least $1-4\delta^2$. This shows that the construction of $\ket{\mu_\beta}$, and the corresponding reflection, succeeds with probability at least $1-{\cal O}(\delta^2)$ even without knowing $\|b\|_1$.

    Thus, for every given $\delta$, we construct an instance of a partition function problem whose solution computes the function $f_{\ell,\ell'}$ in Cor.~\ref{cor:hamming_weight}. To do so, we need to to choose $N \ge 1/(2\delta)$. Let $m$ be the number of queries to $O_H$ performed by the partition function estimation algorithm, and let $1-p$ be the probability of success of the algorithm for some $p < 1/4$. Using inexact reflections can lower the probability of success. Recall that we assume $m = {\cal O}(\log^2 |\chi|/\epsilon) = {\cal O}(\log^2 N/\delta) \le c \log^2 N /\delta$. Because $p$ and $c$ are constants and $\delta \to 0$, for any given $\delta$ below a certain threshold we can always choose $N$ satisfying:
    \begin{equation*}
        \frac{1}{2N} \le \delta \le \frac{p}{c \log^2 N}.
    \end{equation*}    
    Then we have
    \begin{equation*}
        \delta = \sqrt{\delta} \sqrt{\delta} \le \sqrt{\delta} \sqrt{\frac{p}{c \log^2 N}} \le 
        \sqrt{\delta} \sqrt{\frac{p}{\delta m}} = \sqrt{\frac{p}{m}},
    \end{equation*}
    ensuring that the total failure probability due to the inexact reflections is at most $\delta^2 m = p$. Hence, the partition function estimation algorithm will still be successful with probability at least $1-2p > 1/2$, therefore we can ignore the issue of inexact reflections due to not knowing $\|b\|_1$ in advance.

    Putting everything together, the final lower bound in terms of the number of reflections is
    \begin{align*}
          & \frac{\text{lower bound on the number of $O_{H}$ evaluations to estimate partition function}}{\text{upper bound on the number of $O_{H}$ evaluations to perform an arbitrary reflection}}                                           
          \\
        = & \frac{\varOmega(1/\epsilon)}{{\cal O}\lrb{\sqrt{\frac{N}{\|b\|_1}}}}= \varOmega\lrb{\frac{1}{\epsilon}\cdot\sqrt{\frac{\|b\|_1}{N}}}=\varOmega\lrb{\frac{1}{\epsilon}\cdot\sqrt{\frac{N/2 - \delta N}{N}}}={\varOmega}\lrb{\frac{1}{\epsilon} - \sqrt{\frac{1}{\epsilon}}} = {\varOmega}\lrb{\frac{1}{\epsilon}}. \qedhere
    \end{align*}
\end{proof}

\section{Classical lower bound} \label{sec:classical_lower_bound}
Here we prove a classical Hamiltonian query lower bound based on the construction in Section \ref{sec:construction}.

\subsection{Preliminaries}
Our starting point is a classical query bound.
\begin{proposition}[Biased coin model, Claim D.1 of \cite{ge2020estimating}]\label{prop:biased_coin}
    Suppose we are given independent samples of a random variable $X$, where $X$ is drawn from a Bernoulli distribution with probability either $p=1/2+\delta$ or $p=1/2-\delta$. Then, any algorithm that looks at $o(1/\delta^2)$ samples of $X$ cannot decide, with probability better than $1/2+c$ for any constant $c>0$, which distribution $X$ is drawn from.
\end{proposition}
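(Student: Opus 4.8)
The plan is to recast the task as binary hypothesis testing and apply Le Cam's two-point method. Write $P_- = \mathrm{Ber}(\tfrac12-\delta)$ and $P_+ = \mathrm{Ber}(\tfrac12+\delta)$, and for a fixed sample budget $n$ let $P_-^{\otimes n}$, $P_+^{\otimes n}$ denote the joint laws of $n$ independent samples. An algorithm that inspects $X_1,\dots,X_n$ and outputs a guess of the underlying parameter is, once its internal randomness is absorbed into the channel, a randomized map from $\{0,1\}^n$ to $\{+,-\}$ (the two candidate parameters); by the data-processing inequality for total variation, the two induced output distributions are at distance at most $\| P_-^{\otimes n} - P_+^{\otimes n}\|_{\mathrm{TV}}$. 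A standard computation then bounds the success probability of any such algorithm, averaged over the two hypotheses with equal weight --- and hence also its worst-case success probability, since a minimum is at most an average --- by $\tfrac12 + \tfrac12\| P_-^{\otimes n} - P_+^{\otimes n}\|_{\mathrm{TV}}$. So it suffices to show this total-variation distance is $o(1)$ when $n = o(1/\delta^2)$.

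For that I would use Pinsker's inequality together with the additivity of the Kullback--Leibler divergence on product measures: $\| P_-^{\otimes n} - P_+^{\otimes n}\|_{\mathrm{TV}} \le \sqrt{\tfrac12\, D_{\mathrm{KL}}\lrb{P_-^{\otimes n}\,\Vert\,P_+^{\otimes n}}} = \sqrt{\tfrac n2\, D_{\mathrm{KL}}\lrb{P_-\,\Vert\,P_+}}$. An elementary calculation gives $D_{\mathrm{KL}}\lrb{P_-\,\Vert\,P_+} = 2\delta\,\ln\frac{1+2\delta}{1-2\delta}$, and since $\ln\frac{1+2\delta}{1-2\delta} = 4\delta + {\cal O}(\delta^3)$ for $\delta < 1/2$, this is ${\cal O}(\delta^2)$; concretely $D_{\mathrm{KL}}\lrb{P_-\,\Vert\,P_+} \le C\delta^2$ for an absolute constant $C$ and all sufficiently small $\delta$. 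Hence $\| P_-^{\otimes n} - P_+^{\otimes n}\|_{\mathrm{TV}} \le \sqrt{C n \delta^2/2} = {\cal O}(\sqrt{n}\,\delta)$.

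Combining the two bounds: if $n = o(1/\delta^2)$ then $\sqrt n\,\delta = o(1)$, so every algorithm succeeds with probability at most $\tfrac12 + o(1)$; in particular, for every constant $c>0$ there is a threshold below which $\sqrt n\,\delta$ is small enough that no algorithm decides the correct hypothesis with probability exceeding $\tfrac12 + c$, which is exactly the claim.

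I expect only two mildly delicate points. First, covering adaptive procedures: since the samples are i.i.d.\ and simply handed to the algorithm, the only ``adaptivity'' lies in the decision rule, which the data-processing step already handles; if one additionally wants an adaptive stopping time, one either fixes the worst-case budget $n$ (which only helps the algorithm) or invokes optional stopping, a routine modification. Second, the estimate $D_{\mathrm{KL}}\lrb{P_-\,\Vert\,P_+} = {\cal O}(\delta^2)$ --- this quadratic, rather than linear, dependence on $\delta$ is precisely what produces the $1/\delta^2$ sample complexity, and it is a Taylor-expansion exercise. Everything else is bookkeeping with constants, so I do not anticipate a genuine obstacle.
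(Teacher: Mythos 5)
Your proof is correct. Note, however, that the paper does not prove this proposition at all: it is imported verbatim as Claim D.1 of the cited reference \cite{ge2020estimating} and used as a black box, so there is no in-paper argument to compare against. Your self-contained derivation via Le Cam's two-point method --- bounding the average (hence worst-case) success probability by $\tfrac12+\tfrac12\|P_-^{\otimes n}-P_+^{\otimes n}\|_{\mathrm{TV}}$, then controlling the total variation by Pinsker's inequality and the additivity of KL divergence over product measures, with $D_{\mathrm{KL}}(P_-\,\|\,P_+)=2\delta\ln\tfrac{1+2\delta}{1-2\delta}=\Theta(\delta^2)$ --- is the standard route and all the steps check out, including your handling of adaptivity (the samples are i.i.d.\ and data processing absorbs any decision rule) and the passage from the average-case bound to the worst-case claim.
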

We also use the Chernoff bound, a version of which is stated next.
\begin{proposition}[Chernoff bound, Theorem 2 of \cite{Lecture3}]\label{prop:chernoff}
    Let $X_1,X_2,\cdots,X_N$ be independent random variables, where $X_i\in\{0,1\}$ and $\Pr[X_i=1]=p_i$ with $0<p_i<1$. Denote $\mu=\sum_{i=1}^N\mathbb{E}[X_i]=\sum_{i=1}^N p_i$. Then for any $0<t<1$, we have
    \begin{align}
        \Pr[\sum_{i=1}^N X_i \leq (1-t)\mu]\leq e^{-\frac{t^2\mu}{2}},\label{eq:chernoff_1} \\
        \Pr[\sum_{i=1}^N X_i \geq (1+t)\mu]\leq e^{-\frac{t^2\mu}{2+t}}. \nonumber
    \end{align}
\end{proposition}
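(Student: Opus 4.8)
The plan is to derive both tail bounds by the standard exponential-moment (Chernoff) method; since this is a well-known estimate, I would only need to record the argument cleanly. Set $S=\sum_{i=1}^N X_i$, so that $\E[S]=\mu$.

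First I would handle the lower tail. Fixing a free parameter $\lambda>0$, I apply Markov's inequality to the nonnegative random variable $e^{-\lambda S}$ to get $\Pr[S\le(1-t)\mu]\le e^{\lambda(1-t)\mu}\,\E[e^{-\lambda S}]$, then use independence to factor $\E[e^{-\lambda S}]=\prod_{i=1}^{N}(1+p_i(e^{-\lambda}-1))$ and the elementary inequality $1+x\le e^{x}$ to bound this product by $e^{\mu(e^{-\lambda}-1)}$. Minimizing the resulting exponent $\mu(e^{-\lambda}-1)+\lambda(1-t)\mu$ over $\lambda>0$ gives $\lambda=-\ln(1-t)$ and hence $\Pr[S\le(1-t)\mu]\le\exp(\mu[-t-(1-t)\ln(1-t)])$. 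For the upper tail I would run the mirror-image computation with $e^{\lambda S}$, $\lambda>0$, factoring $\E[e^{\lambda S}]$ the same way and optimizing at $\lambda=\ln(1+t)$, which lands on $\Pr[S\ge(1+t)\mu]\le\exp(\mu[t-(1+t)\ln(1+t)])$.

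The last step, and the one I expect to be the only real obstacle, is converting these closed-form exponents into the stated clean bounds on $(0,1)$: I need $-t-(1-t)\ln(1-t)\le-t^2/2$ and $t-(1+t)\ln(1+t)\le-t^2/(2+t)$. The first follows by expanding $\ln(1-t)=-\sum_{k\ge1}t^k/k$, which gives $(1-t)\ln(1-t)=-t+\tfrac{t^2}{2}+\sum_{k\ge3}\tfrac{t^k}{k(k-1)}$, so the leftover terms all have the favorable sign. The second is the delicate one — pinning down the specific denominator $2+t$ — and I would prove it via the Padé-type bound $\ln(1+t)\ge \tfrac{2t}{2+t}$ (itself a one-line monotonicity check, since the derivative of the difference is $t^2/((1+t)(2+t)^2)\ge0$) combined with the elementary rational-function comparison $\tfrac{2t}{2+t}\ge\tfrac{t(t+4)}{(t+2)^2}$, or alternatively by a direct monotonicity argument for $t\mapsto(1+t)\ln(1+t)-t-\tfrac{t^2}{2+t}$ started from its value $0$ at $t=0$. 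Substituting these two estimates back into the bounds from the previous paragraph yields \eqref{eq:chernoff_1} and the upper-tail inequality; everything else is routine bookkeeping.
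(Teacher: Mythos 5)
The paper does not prove this proposition at all --- it is imported by citation as Theorem 2 of \cite{Lecture3} --- so there is no in-paper argument to compare against; your proposal supplies a correct, self-contained proof by the standard exponential-moment method. All the steps check out: Markov's inequality applied to $e^{\mp\lambda S}$, factoring the moment generating function by independence, the bound $1+x\le e^x$, optimizing at $\lambda=-\ln(1-t)$ resp.\ $\lambda=\ln(1+t)$, and the two calculus inequalities $-t-(1-t)\ln(1-t)\le -t^2/2$ (via the series expansion, whose tail terms indeed have the right sign on $(0,1)$) and $t-(1+t)\ln(1+t)\le -t^2/(2+t)$. For the latter, note that the Pad\'e bound $\ln(1+t)\ge 2t/(2+t)$ alone already finishes the job, since $(1+t)\cdot\frac{2t}{2+t}-t=\frac{t^2}{2+t}$ exactly; the additional rational-function comparison $\frac{2t}{2+t}\ge\frac{t(t+4)}{(t+2)^2}$ you mention is true but redundant. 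This is the standard route one would find in the cited lecture notes, so there is nothing methodologically different to weigh here.
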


\subsection{Proof}
We prove \cref{thm:main_classical} by using the same construction as in Section \ref{sec:construction}. We restate \cref{thm:main_classical} here:
\begin{theorem}[Classical lower bound for estimating partition function]
    Let $H:\chi\to \{0,1,\cdots,n\}$ be a Hamiltonian and let $\Zp(\beta)$ be its partition function. Let ${\cal A}^{cl}$ be the class of classical algorithms that estimate $\Zp$ by querying the Hamiltonian $H(x)$ at some value $x$. Any classical algorithm from the class ${\cal A}^{cl}$ that estimates $\hat{Z}$ satisfying $|\hat{Z}-\Zp(+\infty)|\leq \epsilon \cdot\Zp(+\infty)$ requires
    $$\varOmega({1}/{\epsilon^2})$$ such
    queries to $H(x)$.
\end{theorem}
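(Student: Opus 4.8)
The plan is to reuse, essentially verbatim, the Hamiltonian construction of Section~\ref{sec:construction}: take $\chi = [N]$ and $H_b(x) = 1 - b_x$, so that $\Zp(+\infty) = \|b\|_1$ by \eqref{eq:Zp_inf_b}. The new ingredient is to make $b$ random: each coordinate $b_i$ is an independent Bernoulli variable, with success probability $\tfrac12 + \delta$ in one family of instances and $\tfrac12 - \delta$ in the other, where I will set $\delta = 2\epsilon$. The crucial observation is that one classical query to the Hamiltonian at index $x$ returns $H_b(x) = 1 - b_x$, i.e., exactly one bit $b_x$, which is precisely one sample from the underlying Bernoulli distribution. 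Because the coordinates of $b$ are independent, querying an index not yet seen returns a fresh i.i.d.\ sample regardless of how that index was chosen, and re-querying an index returns nothing new; so an adaptive partition-function algorithm making $q < N$ queries is no more powerful, for the purpose of deciding which family the instance belongs to, than an algorithm that receives $q$ i.i.d.\ samples from the Bernoulli distribution. This is exactly the setting of the biased-coin model in \cref{prop:biased_coin}.

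The steps, in order: (1) Fix $\epsilon$, set $\delta = 2\epsilon$, and choose $N$ polynomially large in $1/\delta$, say $N = \lceil 1/\delta^{3}\rceil$; this guarantees simultaneously that $\delta^2 N \to \infty$ and that any algorithm using $q = o(1/\delta^2)$ queries uses $q < N$. (2) Use the Chernoff bound \cref{prop:chernoff} with deviation parameter $t = \varTheta(\delta)$ to show that, outside an event of probability at most $1/100$, the realized Hamming weight satisfies $\|b\|_1 \le N\lrb{\tfrac12 - \tfrac34\delta}$ in the first family and $\|b\|_1 \ge N\lrb{\tfrac12 + \tfrac34\delta}$ in the second; this is where $\delta^2 N \to \infty$ is used. (3) Observe that since $\epsilon = \tfrac12\delta < \tfrac34\delta$, on the concentration event the two intervals $\big[(1-\epsilon)\Zp(+\infty),\,(1+\epsilon)\Zp(+\infty)\big]$ are disjoint (a one-line inequality), so any $\hat Z$ with $|\hat Z - \Zp(+\infty)| \le \epsilon\,\Zp(+\infty)$ reveals which family produced $b$, e.g.\ by testing $\hat Z$ against $N/2$. (4) Assemble the reduction: a partition-function estimator succeeding with probability $\ge 2/3$ using $q = o(1/\epsilon^2) = o(1/\delta^2)$ queries would yield, via the simulation above and a union bound with the $1/100$ failure probability of the concentration event, a biased-coin distinguisher that reads $o(1/\delta^2)$ samples and answers correctly with probability $\ge 2/3 - 1/100 > 1/2 + c$ for a positive constant $c$ --- contradicting \cref{prop:biased_coin} for all sufficiently small $\delta$. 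Hence $q = \varOmega(1/\epsilon^2)$.

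The only place the argument has genuine content is step (2): one must pin down the fluctuations of the random Hamming weight $\|b\|_1$ around its mean $N\lrb{\tfrac12 \pm \delta}$ tightly enough that the two families remain separated by a relative margin safely above $\epsilon$, which is precisely what forces $N = \omega(1/\delta^2)$. The mild tension this creates --- $N$ must exceed the query budget, which is itself sublinear in $1/\delta^2$ --- is dissolved by taking $N$ a fixed polynomial in $1/\delta$ of degree at least $3$, as above. A second point worth stating explicitly, though standard, is the reduction's correctness: adaptivity in choosing query indices confers no advantage over plain i.i.d.\ sampling because the bits of $b$ are independent. Everything else --- the disjointness arithmetic of step (3) and the union bound of step (4) --- is routine.
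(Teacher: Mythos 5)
Your proposal is correct and follows essentially the same route as the paper's proof: the identical Hamiltonian $H_b(x)=1-b_x$ with i.i.d.\ Bernoulli bits, Chernoff concentration of $\|b\|_1$ to separate the two families by a relative margin exceeding $\epsilon=\varTheta(\delta)$, and a reduction to the biased-coin model of \cref{prop:biased_coin}. The only differences are cosmetic --- you fix $\delta=2\epsilon$ and $N=\lceil 1/\delta^3\rceil$ where the paper takes $N\geq 20/\delta^2$, and you spell out the adaptivity-is-useless and interval-disjointness points that the paper treats more briefly.
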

\begin{proof}
    Let $\chi$ and $H_b(x)$ be as in the proof of \cref{thm:main_2}. Thus, $\Zp(+\infty)=\|b\|_1$ still holds, see \eqref{eq:Zp_inf_b}. Assume that the digits of the $N$-digit binary string $b$ is drawn from a Bernoulli distribution with probability of $1$ equal to $p=1/2+\delta$ or $p=1/2-\delta$. Note that a query to look at one (new) digit of $b$ is equivalent to looking at one sample from the Bernoulli distribution --- we can assume that we do not query the same digit twice, because we could simply record its value from the previous query.

    We pick $N\geq 20/\delta^2$ so that the following two conditions are satisfied: when $p=1/2+\delta$, we have
    \begin{align}
        \Pr[\|b\|_1\geq N(1/2+\delta/2-\delta^2)]\geq 0.99;
        \label{eq:condition_1}
    \end{align}
    when $p=1/2-\delta$, we have
    \begin{align*}
        \Pr[\|b\|_1\leq N(1/2-\delta/2-\delta^2)]\geq 0.99.
    \end{align*}
    We prove \eqref{eq:condition_1}. Eq. \eqref{eq:condition_1} is equivalent to $\Pr[\|b\|_1\leq N(1/2+\delta/2-\delta^2)]\leq 0.01$. In \cref{prop:chernoff}, we set $t=\delta$, which implies
    \begin{align*}
        (1-t)\mu=(1-\delta)N(1/2+\delta)=N(1/2+\delta/2-\delta^2).
    \end{align*}
    Using $N\geq 20/\delta^2$, the right-hand side of \eqref{eq:chernoff_1} becomes
    \begin{align*}
        e^{-\frac{t^2\mu}{2}}=e^{-\frac{\delta^2}{2}N(\frac{1}{2}+\delta)} \le e^{-5}\le 0.01
    \end{align*}
    Thus, \eqref{eq:condition_1} holds. The proof of second condition is analogous.

    Because $$\Zp(+\infty)=\|b\|_1,$$ with probability at least 0.99, we have $\Zp(+\infty)\ge N(1/2+\delta/2-\delta^2):=\tau _1$ when $p=1/2+\delta$, and $\Zp(+\infty)\le N(1/2-\delta/2-\delta^2):=\tau _2$ when $p=1/2-\delta$. Therefore, any algorithm that can estimate $\Zp(+\infty)$ with relative accuracy better than $\epsilon=\frac{\tau_1-\tau_2}{\Zp(+\infty)} \le \frac{N\delta}{N(1/2+\delta/2-\delta^2)} = {\cal O}(\delta)$ (with probability greater than $\frac{1}{2}+c$, where $c>0$ is any constant) would be able to distinguish the two cases for $p$. By \cref{prop:biased_coin}, to solve this task we must look at $\varOmega\left(\frac{1}{\delta^2}\right)=\varOmega\left(\frac{1}{\epsilon^2}\right)$ samples from the Bernoulli distributions, and therefore we need that many queries to the Hamiltonian $H$.
\end{proof}
As previously remarked, \cite{harris2020parameter} gives a stronger sampling lower bound than the one we provide above.

\section{Conclusions}\label{sec:conclusion}
In this paper we prove a quantum lower bound on the number of reflections for estimating partition function at low temperatures. This implies the optimality in the $\epsilon$ parameter of recent quantum algorithms for this problem \cite{montanaro2015quantum,harrow2020adaptive,arunachalam2022simpler,cornelissen2023sublinear}. Our classical lower bound is weaker than the one in \cite{kolmogorov2018faster}, but it uses a more traditional computational model and has a simple proof; \cite{harris2020parameter} supersedes \cite{kolmogorov2018faster} and shows a stronger lower bound in the standard black-box sampling computational model.

The unresolved question is whether we can achieve a lower bound stronger than $\varOmega({1}/{\epsilon})$ for quantum algorithms, in particular incorporating $|\chi|$ into the lower bound. We attempted several constructions for $H$ and $\chi$, and the one discussed here is the most robust among those that we examined. It is possible that a different approach -- one that does not uses a reduction from the problem of determining the Hamming weight of an unknown binary string -- is necessary; in particular, a reduction from a problem with multiple instance parameters involved in the lower bound might be more successful in incorporating $|\chi|$. It is possible that the approach of \cite{harris2020parameter} leads to a stronger quantum lower bound: we did not know their result (which extends \cite{kolmogorov2018faster}) until posting a first version of this manuscript on arXiv, hence we did not make any attempt at quantizing it yet.

\section*{Acknowledgements}
We are grateful to D.~Harris for pointing out \cite{harris2020parameter} to us. G.~Nannicini is partially supported by ONR award \# N000142312585.


\bibliographystyle{alpha}
\bibliography{study-plan-ref}

\newcommand{\etalchar}[1]{$^{#1}$}
\begin{thebibliography}{WTR{\etalchar{+}}20}

\bibitem[AHN{\etalchar{+}}22]{arunachalam2022simpler}
Srinivasan Arunachalam, Vojtech Havlicek, Giacomo Nannicini, Kristan Temme, and
  Pawel Wocjan.
\newblock Simpler (classical) and faster (quantum) algorithms for gibbs
  partition functions.
\newblock {\em Quantum}, 6:789, 2022.

\bibitem[AKKR02]{albeverio2002euclidean}
Sergio Albeverio, Yu~Kondratiev, Yu~Kozitsky, and Michael R{\"o}ckner.
\newblock Euclidean gibbs states of quantum lattice systems.
\newblock {\em Reviews in Mathematical Physics}, 14(12):1335--1401, 2002.

\bibitem[BC20]{bhandari2020improved}
Siddharth Bhandari and Sayantan Chakraborty.
\newblock Improved bounds for perfect sampling of k-colorings in graphs.
\newblock In {\em Proceedings of the 52nd Annual ACM SIGACT Symposium on Theory
  of Computing}, pages 631--642, 2020.

\bibitem[BHMT02]{brassard2002quantum}
Gilles Brassard, Peter Hoyer, Michele Mosca, and Alain Tapp.
\newblock Quantum amplitude amplification and estimation.
\newblock {\em Contemporary Mathematics}, 305:53--74, 2002.

\bibitem[B{\v{S}}VV08]{bezakova2008accelerating}
Ivona Bez{\'a}kov{\'a}, Daniel {\v{S}}tefankovi{\v{c}}, Vijay~V Vazirani, and
  Eric Vigoda.
\newblock Accelerating simulated annealing for the permanent and combinatorial
  counting problems.
\newblock {\em SIAM Journal on Computing}, 37(5):1429--1454, 2008.

\bibitem[CCH{\etalchar{+}}23]{chakrabarti2023quantum}
Shouvanik Chakrabarti, Andrew~M Childs, Shih-Han Hung, Tongyang Li, Chunhao
  Wang, and Xiaodi Wu.
\newblock Quantum algorithm for estimating volumes of convex bodies.
\newblock {\em ACM Transactions on Quantum Computing}, 4(3):1--60, 2023.

\bibitem[CH23]{cornelissen2023sublinear}
Arjan Cornelissen and Yassine Hamoudi.
\newblock A sublinear-time quantum algorithm for approximating partition
  functions.
\newblock In {\em Proceedings of the 2023 Annual ACM-SIAM Symposium on Discrete
  Algorithms (SODA)}, pages 1245--1264. SIAM, 2023.

\bibitem[CLL{\etalchar{+}}22]{childs2022quantum}
Andrew~M Childs, Tongyang Li, Jin-Peng Liu, Chunhao Wang, and Ruizhe Zhang.
\newblock Quantum algorithms for sampling log-concave distributions and
  estimating normalizing constants.
\newblock {\em Advances in Neural Information Processing Systems},
  35:23205--23217, 2022.

\bibitem[CSS21]{chowdhury2021computing}
Anirban~N Chowdhury, Rolando~D Somma, and Yi{\u{g}}it Suba{\c{s}}{\i}.
\newblock Computing partition functions in the one-clean-qubit model.
\newblock {\em Physical Review A}, 103(3):032422, 2021.

\bibitem[DF91]{dyer1991computing}
Martin Dyer and Alan Frieze.
\newblock Computing the volume of convex bodies: a case where randomness
  provably helps.
\newblock {\em Probabilistic combinatorics and its applications},
  44(123-170):0754--68052, 1991.

\bibitem[GLL20]{ge2020estimating}
Rong Ge, Holden Lee, and Jianfeng Lu.
\newblock Estimating normalizing constants for log-concave distributions:
  Algorithms and lower bounds.
\newblock In {\em Proceedings of the 52nd Annual ACM SIGACT Symposium on Theory
  of Computing}, pages 579--586, 2020.

\bibitem[Gro96]{grover1996fast}
Lov~K Grover.
\newblock A fast quantum mechanical algorithm for database search.
\newblock In {\em Proceedings of the twenty-eighth annual ACM symposium on
  Theory of computing}, pages 212--219, 1996.

\bibitem[GSLW19]{gilyen2019quantum}
Andr{\'a}s Gily{\'e}n, Yuan Su, Guang~Hao Low, and Nathan Wiebe.
\newblock Quantum singular value transformation and beyond: exponential
  improvements for quantum matrix arithmetics.
\newblock In {\em Proceedings of the 51st Annual ACM SIGACT Symposium on Theory
  of Computing}, pages 193--204, 2019.

\bibitem[GZ04]{gardiner2004quantum}
Crispin Gardiner and Peter Zoller.
\newblock {\em Quantum noise: a handbook of Markovian and non-Markovian quantum
  stochastic methods with applications to quantum optics}.
\newblock Springer Science \& Business Media, 2004.

\bibitem[HK20]{harris2020parameter}
David~G Harris and Vladimir Kolmogorov.
\newblock Parameter estimation for gibbs distributions.
\newblock {\em arXiv preprint arXiv:2007.10824}, 2020.

\bibitem[Hub15]{huber2015approximation}
Mark Huber.
\newblock Approximation algorithms for the normalizing constant of gibbs
  distributions.
\newblock {\em The Annals of Applied Probability}, pages 974--985, 2015.

\bibitem[HW20]{harrow2020adaptive}
Aram~W Harrow and Annie~Y Wei.
\newblock Adaptive quantum simulated annealing for bayesian inference and
  estimating partition functions.
\newblock In {\em Proceedings of the Fourteenth Annual ACM-SIAM Symposium on
  Discrete Algorithms}, pages 193--212. SIAM, 2020.

\bibitem[Jer95]{jerrum1995very}
Mark Jerrum.
\newblock A very simple algorithm for estimating the number of k-colorings of a
  low-degree graph.
\newblock {\em Random Structures \& Algorithms}, 7(2):157--165, 1995.

\bibitem[Kol18]{kolmogorov2018faster}
Vladimir Kolmogorov.
\newblock A faster approximation algorithm for the gibbs partition function.
\newblock In {\em Conference On Learning Theory}, pages 228--249. PMLR, 2018.

\bibitem[LC17]{low2017hamiltonian}
Guang~Hao Low and Isaac~L Chuang.
\newblock Hamiltonian simulation by uniform spectral amplification.
\newblock {\em arXiv preprint arXiv:1707.05391}, 2017.

\bibitem[Mon15]{montanaro2015quantum}
Ashley Montanaro.
\newblock Quantum speedup of monte carlo methods.
\newblock {\em Proceedings of the Royal Society A: Mathematical, Physical and
  Engineering Sciences}, 471(2181):20150301, 2015.

\bibitem[MS13]{mossel2013exact}
Elchanan Mossel and Allan Sly.
\newblock Exact thresholds for ising--gibbs samplers on general graphs.
\newblock 2013.

\bibitem[NW99]{nayak1999quantum}
Ashwin Nayak and Felix Wu.
\newblock The quantum query complexity of approximating the median and related
  statistics.
\newblock In {\em Proceedings of the thirty-first annual ACM symposium on
  Theory of computing}, pages 384--393, 1999.

\bibitem[Rei16]{reichl2016modern}
Linda~E Reichl.
\newblock {\em A modern course in statistical physics}.
\newblock John Wiley \& Sons, 2016.

\bibitem[Shi21]{shi2021hyperparameters}
Bin Shi.
\newblock On the hyperparameters in stochastic gradient descent with momentum.
\newblock {\em arXiv preprint arXiv:2108.03947}, 2021.

\bibitem[SSJ20]{shi2020learning}
Bin Shi, Weijie~J Su, and Michael~I Jordan.
\newblock On learning rates and schr\"odinger operators.
\newblock {\em arXiv preprint arXiv:2004.06977}, 2020.

\bibitem[{\v{S}}VV09]{vstefankovivc2009adaptive}
Daniel {\v{S}}tefankovi{\v{c}}, Santosh Vempala, and Eric Vigoda.
\newblock Adaptive simulated annealing: A near-optimal connection between
  sampling and counting.
\newblock {\em Journal of the ACM (JACM)}, 56(3):1--36, 2009.

\bibitem[TMCA22]{trushechkin2022open}
AS~Trushechkin, M~Merkli, JD~Cresser, and J~Anders.
\newblock Open quantum system dynamics and the mean force gibbs state.
\newblock {\em AVS Quantum Science}, 4(1), 2022.

\bibitem[Vig99]{vigoda1999improved}
Eric Vigoda.
\newblock Improved bounds for sampling colorings.
\newblock In {\em 40th Annual Symposium on Foundations of Computer Science
  (Cat. No. 99CB37039)}, pages 51--59. IEEE, 1999.

\bibitem[Wei18]{Lecture3}
Matt Weinberg.
\newblock Advanced algorithm design lecture 3: Concentration bounds, 2018.

\bibitem[WTR{\etalchar{+}}20]{weinberg2020scaling}
Phillip Weinberg, Marek Tylutki, Jami~M R{\"o}nkk{\"o}, Jan Westerholm, Jan~A
  {\AA}str{\"o}m, Pekka Manninen, P{\"a}ivi T{\"o}rm{\"a}, and Anders~W
  Sandvik.
\newblock Scaling and diabatic effects in quantum annealing with a d-wave
  device.
\newblock {\em Physical Review Letters}, 124(9):090502, 2020.

\bibitem[YLC14]{yoder2014fixed}
Theodore~J Yoder, Guang~Hao Low, and Isaac~L Chuang.
\newblock Fixed-point quantum search with an optimal number of queries.
\newblock {\em Physical Review Letters}, 113(21):210501, 2014.

\end{thebibliography}


\end{document}